\newtheorem{theorem}{Theorem}
\newtheorem{lemma}{Lemma}
\theoremstyle{definition}
\newtheorem{definition}{Definition}
\theoremstyle{remark}
\newtheorem{remark}{Remark}
\newtheorem{example}{Example}
\newcommand{\Om}{\Omega}
\newcommand{\Sig}{\Sigma}
\newcommand{\muP}{\mu}
\newcommand{\cE}{\mathcal{E}}
\newcommand{\cI}{\mathcal{I}}
\newcommand{\indep}{\mathrel{\perp}}
\newcommand{\symdiff}{\mathbin{\triangle}}
\newcommand{\eqmu}{\mathrel{\equiv_{\mu}}}
\title{\textbf{Chronology as a Consistency Invariant in Composable Information Systems}}
\author{Anherutowa Calvo \and Dante K. Calvo}
\date{}
\begin{document}
\maketitle

\begin{abstract}
We formalize a minimal setting in which a chronology (a strict partial order on events) is forced by consistency of distributed information under local composability. A system consists of distributed records interpreted as constraints over a global possibility space \((\Om,\Sig)\), optionally equipped with a measure \(\muP\). Events act locally by monotonically tightening records. Independent events commute (diamond/trace semantics), yielding schedule-gauge invariance. We define operational influence relations on events without assuming primitive time: an event influences another if executing the former can change what constraint the latter writes on a shared subsystem. The central technical point is that influence cycles do not by themselves entail contradiction without an additional exclusivity condition. Accordingly, we distinguish \emph{weak influence} (mere dependence) from \emph{strong influence} (exclusive branching on an observable predicate). Under global satisfiability of all reachable record states, together with diamond and monotone information writing, and a mild branch-determinacy axiom for witnessed exclusivity, we prove that \emph{strong} influence is acyclic and therefore induces an intrinsic strict partial order (chronology). We prove trace invariance and minimality/uniqueness of the derived order, provide a monotone information clock \(-\log \muP(\text{feasible set})\), and give a complete escape taxonomy: any model that admits strong-influence cycles without inconsistency must violate at least one of (i) global consistency, (ii) local composability, (iii) monotone information writing, or (iv) branch determinacy of the exclusive witnesses. Two examples illustrate (a) commuting tightenings and (b) a cycle-to-contradiction gadget.

\smallskip
\noindent\textbf{Informally,} in any system that only ever rules out possibilities (monotone record-writing), treats independent actions as reorderable (diamond), and never permits contradiction along reachable evolutions (global satisfiability), mutually exclusive operational dependencies cannot form loops. Hence an intrinsic before/after structure is unavoidable.
\end{abstract}

\tableofcontents

\section{Introduction}
This paper isolates a structural question appearing across concurrency theory, distributed systems, and foundations of information:

\begin{quote}
\emph{When does consistency of distributed records force an intrinsic notion of precedence (chronology) among events, without presupposing time?}
\end{quote}

Our contribution is \emph{not} the classical fact that an acyclic relation can be topologically sorted, but a precise statement of which operational and informational assumptions force \emph{acyclicity of a causality-like relation derived from record updates}. We emphasize a common pitfall: a cyclic dependence graph need not be inconsistent. Accordingly, we isolate a minimal exclusivity witness (strong influence) and a mild determinacy axiom ensuring the witness is not an artifact of irrelevant scheduling choices.

\paragraph{Informal statement of the theorem.}
At a high level: if a system updates distributed records only by tightening constraints, independent updates commute, and every reachable global record state remains satisfiable, then any operational dependence that carries \emph{exclusive content} (our \emph{strong influence}) cannot be cyclic. A strong-influence cycle would force the system to enforce mutually incompatible branches on some observable, producing an empty (or null) feasible set and contradicting global satisfiability. Therefore strong influence induces a strict partial order, interpreted as an intrinsic chronology that is invariant under trace-equivalent schedules.

\paragraph{Positioning.}
The results are best read as a representation theorem in compositional information systems: chronology emerges as an invariant required for the existence of consistent global states under locally composable record updates.

\section{Model: possibility space, sites, records}

\subsection{Possibility space}
\begin{definition}[Possibility space]
A possibility space is a measurable space \((\Om,\Sig)\). Optionally, we consider a measured space \((\Om,\Sig,\muP)\) with \(\muP(\Om)>0\) and \(\muP\) \(\sigma\)-finite.
\end{definition}

\subsection{Sites and distributed record states}
Let \(X=\{1,\dots,n\}\) be a finite set of sites.

\begin{definition}[Distributed record state]
A distributed record state is \(R=(R_1,\dots,R_n)\in \Sig^n\), where each \(R_i\in\Sig\) is the local record (constraint) at site \(i\).
\end{definition}

\begin{definition}[Feasible set]
For \(R\in\Sig^n\), define the global feasible set
\[
F(R) := \bigcap_{i=1}^n R_i \in \Sig.
\]
We say \(R\) is globally consistent if \(F(R)\neq \emptyset\). In the measured setting, one may strengthen this to \(\muP(F(R))>0\).
\end{definition}

\subsection{Null-set equivalence (measured setting)}
When \(\muP\) is present, we treat equality up to null sets as observational equivalence.

\begin{definition}[Equivalence up to null sets]
For \(A,B\in\Sig\), write \(A \eqmu B\) iff \(\muP(A\symdiff B)=0\).
\end{definition}

\section{Events as local monotone information updates}

\subsection{Events and supports}
\begin{definition}[Event]
An event \(e\) consists of:
\begin{enumerate}[leftmargin=2em]
\item a support \(S(e)\subseteq X\);
\item an update map \(U_e:\Sig^n\to\Sig^n\)
\end{enumerate}
such that for all \(R\in\Sig^n\) and sites \(i\in X\):
\begin{enumerate}[leftmargin=2em]
\item (Locality) if \(i\notin S(e)\) then \((U_e(R))_i = R_i\);
\item (Monotone tightening) if \(i\in S(e)\) then \((U_e(R))_i \subseteq R_i\).
\end{enumerate}
\end{definition}

\begin{remark}
Monotone tightening models information writing. If one wishes to include erasure/forgetting, it should be modeled explicitly as non-monotone operations; our escape taxonomy (Theorem~\ref{thm:taxonomy}) highlights monotonicity as one of the necessary premises for chronology forcing.
\end{remark}

\subsection{Reachability}
Fix an initial record state \(R^{(0)}\in\Sig^n\).

\begin{definition}[Reachable state]
A record state \(R\) is reachable if there exist events \(e_1,\dots,e_k\) such that
\[
R = U_{e_k}\circ \cdots \circ U_{e_1}(R^{(0)}).
\]
\end{definition}

\section{Local composability: diamond and trace semantics}

\subsection{Independence}
\begin{definition}[Independence]
Two events \(e,f\) are independent, written \(e\indep f\), if \(S(e)\cap S(f)=\emptyset\).
\end{definition}

\subsection{Diamond axiom}
\begin{definition}[Diamond / commutation axiom]\label{def:diamond}
The system satisfies the diamond property if for all reachable \(R\) and all independent \(e\indep f\),
\[
U_e(U_f(R)) = U_f(U_e(R)).
\]
\end{definition}

\begin{remark}
Diamond implies that schedules are gauge: swapping independent adjacent events does not change the resulting record state. This induces Mazurkiewicz trace semantics (see, e.g., \cite{MazurkiewiczTrace,DiekertRozenberg}).
\end{remark}

\section{Operational influence (weak and strong)}

\subsection{Write effects}
\begin{definition}[Write effect]
For event \(e\), state \(R\), and site \(i\in S(e)\), define
\[
\Delta_i(e;R) := R_i \setminus (U_e(R))_i.
\]
For \(i\notin S(e)\), set \(\Delta_i(e;R)=\emptyset\).
\end{definition}

\subsection{Weak influence}
\begin{definition}[Weak operational influence]\label{def:weakinf}
We say \(e\) weakly influences \(f\), written \(e \to f\), if:
\begin{enumerate}[leftmargin=2em]
\item \(S(e)\cap S(f)\neq \emptyset\), and
\item there exist a reachable \(R\) and a site \(i\in S(e)\cap S(f)\) such that
\[
\Delta_i(f;U_e(R)) \neq \Delta_i(f;R).
\]
In the measured setting, this may be relaxed to \(\muP(\Delta_i(f;U_e(R))\symdiff \Delta_i(f;R))>0\).
\end{enumerate}
\end{definition}

\begin{remark}
A directed cycle in weak influence reflects cyclic dependence, but does not necessarily force inconsistency. The main theorem uses strong influence, which adds an explicit exclusivity witness.
\end{remark}

\subsection{Strong influence via exclusive branching}
\begin{definition}[Binary observable]
A binary observable is a measurable set \(B\in\Sig\) interpreted as the proposition ``\(B\) holds'' for worlds in \(B\), and ``not-\(B\)'' for worlds in \(\Om\setminus B\).
\end{definition}

\begin{definition}[Strong influence]\label{def:stronginf}
We say \(e\) strongly influences \(f\), written \(e \Rightarrow f\), if there exist:
\begin{enumerate}[leftmargin=2em]
\item a reachable state \(R\),
\item a shared site \(i\in S(e)\cap S(f)\),
\item a binary observable \(B\in\Sig\),
\end{enumerate}
such that letting \(R^{(0)}:=R\) and \(R^{(1)}:=U_e(R)\), the post-\(f\) records at site \(i\) enforce \emph{exclusive} branches on \(B\):
\[
\muP\!\left( (U_f(R^{(0)}))_i \cap (U_f(R^{(1)}))_i \cap B \right)=0,
\]
and both branches are nontrivial:
\[
\muP\!\left( (U_f(R^{(0)}))_i \cap B \right)>0,\qquad
\muP\!\left( (U_f(R^{(1)}))_i \cap B \right)>0.
\]
In the purely set-theoretic (non-measured) setting, replace these by:
\[
(U_f(R^{(0)}))_i \cap (U_f(R^{(1)}))_i \cap B = \emptyset,
\]
and both intersections with \(B\) are nonempty.
\end{definition}

\begin{remark}
Intuitively, strong influence captures situations where the execution of \(e\) changes which mutually exclusive factual alternative \(f\) enforces on an observable predicate.
\end{remark}

\begin{definition}[Exclusive branch pair (derived from a witness)]
Fix an instance witnessing \(e \Rightarrow f\) via \((R,i,B)\) as in Definition~\ref{def:stronginf}.
Define the two post-\(f\) branch-constraints on \(B\) at site \(i\) by
\[
C^{0}_{f} := (U_f(R))_i \cap B,\qquad
C^{1}_{f} := (U_f(U_e(R)))_i \cap B.
\]
We call \((C_f^0,C_f^1)\) an \emph{exclusive branch pair} if
\(C_f^0 \cap C_f^1=\emptyset\) (and in the measured setting \(\muP(C_f^0\cap C_f^1)=0\))
and both are nontrivial (nonempty / positive measure).
\end{definition}

Let \(\prec\) denote the transitive closure of \(\Rightarrow\).

\begin{definition}[Strong-influence cycle]
A strong-influence cycle is a sequence \(e_1,\dots,e_k\) with \(k\ge 2\) such that
\[
e_1\Rightarrow e_2 \Rightarrow \cdots \Rightarrow e_k \Rightarrow e_1.
\]
\end{definition}

\section{Global consistency and branch-determinacy}

\begin{definition}[Global consistency invariance (GS)]\label{def:GS}
\textbf{(GS)} Every reachable record state \(R\) is globally consistent: \(F(R)\neq \emptyset\).
In the measured setting, one may strengthen to \(\muP(F(R))>0\) for all reachable \(R\).
\end{definition}

\begin{definition}[Branch determinacy (BD)]\label{def:BD}
\textbf{(BD)} Consider any witnessed strong influence \(e\Rightarrow f\) with witness \((R,i,B)\) and exclusive branch pair
\[
C_f^0=(U_f(R))_i\cap B,\qquad C_f^1=(U_f(U_e(R)))_i\cap B.
\]
We say the system is \emph{branch-determinate for this witness} if, whenever \(f\) is executed in any reachable history in which:
\begin{enumerate}[leftmargin=2em]
\item the local preconditions relevant to the witness are trace-equivalent to those in the witness (i.e.\ differ only by swaps of adjacent independent events), and
\item \(e\) has (respectively has not) occurred before \(f\),
\end{enumerate}
then the post-\(f\) constraint on \(B\) at site \(i\) agrees with \(C_f^1\) (respectively \(C_f^0\)), up to null sets in the measured setting.
\end{definition}

\begin{remark}
(BD) is a mild non-pathological assumption: it rules out the degenerate situation where the ``exclusive branch'' can be flipped by unrelated independent scheduling choices.
\end{remark}

\section{Main results}

\subsection{Binary witness lemma (weak influence)}
\begin{lemma}[Binary witness for weak influence]\label{lem:binaryweak}
Assume the measured setting and let \(e\to f\) (Definition~\ref{def:weakinf}). Then there exist a reachable state \(R\), a shared site \(i\in S(e)\cap S(f)\), and a measurable set \(B\in\Sig\) such that
\[
\muP\Big(\big((U_f(R))_i \cap B\big) \symdiff \big((U_f(U_e(R)))_i \cap B\big)\Big) > 0.
\]
\end{lemma}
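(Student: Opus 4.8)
The plan is to turn the positive-measure discrepancy in \emph{write-effects} supplied by weak influence into a positive-measure discrepancy in the \emph{post-\(f\) records} by intersecting against a suitably chosen record set. First unpack Definition~\ref{def:weakinf} in its measured form: fix a reachable \(R\) and a shared site \(i\in S(e)\cap S(f)\) with \(\muP\bigl(\Delta_i(f;R)\symdiff\Delta_i(f;U_e(R))\bigr)>0\). Put \(R':=U_e(R)\), which is again reachable; monotone tightening gives the containments \(R'_i\subseteq R_i\), \((U_f(R))_i\subseteq R_i\), \((U_f(R'))_i\subseteq R'_i\), together with the identities \((U_f(R))_i=R_i\setminus\Delta_i(f;R)\) and \((U_f(R'))_i=R'_i\setminus\Delta_i(f;R')\). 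Abbreviate \(D:=\Delta_i(f;R)\symdiff\Delta_i(f;R')\), so \(\muP(D)>0\), and note \(\Delta_i(f;R')\subseteq R'_i\) whereas \(\Delta_i(f;R)\subseteq R_i\); in particular \(D\subseteq R_i\). Write \(D=(D\cap R'_i)\cup(D\setminus R'_i)\) (a disjoint union), so at least one piece has positive measure, and split into cases.

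\emph{Case A: \(\muP(D\cap R'_i)>0\).} Take \(B:=R'_i=(U_e(R))_i\) and keep the witness state \(R\). Using \((U_f(R'))_i\subseteq R'_i\) and the tightening identities one computes \((U_f(R))_i\cap B=R'_i\setminus\Delta_i(f;R)\) and \((U_f(U_e(R)))_i\cap B=(U_f(R'))_i=R'_i\setminus\Delta_i(f;R')\), whence the symmetric difference of the two equals \(R'_i\cap D\), of positive measure. \emph{Case B: \(\muP(D\cap R'_i)=0\) but \(\muP\bigl((R_i\setminus R'_i)\setminus\Delta_i(f;R)\bigr)>0\).} Now \(D\) is, up to null sets, concentrated on \(R_i\setminus R'_i\), where \(\Delta_i(f;R')\) vanishes; take \(B:=R_i\setminus R'_i\) and again the state \(R\). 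Then \((U_f(R))_i\cap B=(R_i\setminus R'_i)\setminus\Delta_i(f;R)\) has positive measure while \((U_f(U_e(R)))_i\cap B=(U_f(R'))_i\cap(R_i\setminus R'_i)\subseteq R'_i\cap(R_i\setminus R'_i)=\emptyset\), so the symmetric difference has positive measure. In either case the lemma holds with the exhibited triple \((R,i,B)\).

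The remaining possibility — and the step I expect to be the crux — is \emph{Case C}, where \(\muP(D\cap R'_i)=0\) and \(\muP\bigl((R_i\setminus R'_i)\setminus\Delta_i(f;R)\bigr)=0\), i.e.\ \(D\eqmu R_i\setminus R'_i=\Delta_i(e;R)\) and \(f\) applied at \(R\) already excises exactly the fresh content that \(e\) writes at site \(i\). Here every naive observable collapses: for \(B\subseteq R'_i\) the two post-\(f\) records agree on \(B\) up to null sets, and for \(B\subseteq R_i\setminus R'_i\) both intersect \(B\) in a null set (the first because \(B\subseteq_{\muP}\Delta_i(f;R)\), the second because \((U_f(R'))_i\subseteq R'_i\)). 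The route I would pursue is to rule Case C out rather than to realize it: either exploit reachability to relocate the witness to a history in which \(e\) has not yet acted on site \(i\) while \(f\) still tightens there, or — if the model genuinely permits \(e\)'s entire effect at \(i\) to be absorbed by \(f\) — record a mild non-degeneracy on exclusive witnesses, in the spirit of (BD) (Definition~\ref{def:BD}), under which such a degenerate witness does not arise. I expect essentially all of the work, and the only delicate point, to lie in closing this last gap.
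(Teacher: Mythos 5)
Your Cases A and B are correct, and your instinct that Case C is ``the crux'' is exactly right --- but the gap cannot be closed, because Case C is realizable and the lemma is false there. Concretely: take \(X=\{1\}\), \(\Om=\{0,1\}\) with counting measure, \(R^{(0)}_1=\Om\), and let both \(e\) and \(f\) act by \(R_1\mapsto R_1\cap\{1\}\). At \(R=R^{(0)}\) one has \(\Delta_1(f;R)=\{0\}\) while \(\Delta_1(f;U_e(R))=\emptyset\), so \(e\to f\) holds in the measured sense; yet \((U_f(R))_1=(U_f(U_e(R)))_1=\{1\}\), and the only other reachable state \((\{1\})\) yields no witness at all, so no triple \((R,i,B)\) satisfies the conclusion. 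This is precisely your Case C (\(D\eqmu\Delta_i(e;R)\), with \(f\) at \(R\) excising exactly the content that \(e\) writes): the discrepancy in write-effects records only ``which event deletes those worlds first'' and leaves no trace in the post-\(f\) records. Neither of your proposed repairs is available in general --- reachability supplies no alternative witnessing state here, and (BD) governs strong-influence witnesses, not weak ones. The honest fix is to strengthen Definition~\ref{def:weakinf} so that weak influence requires a positive-measure discrepancy of the post-\(f\) records themselves (equivalently, to add a non-absorption clause excluding your Case C); under that definition your Cases A and B already constitute a complete proof.

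For comparison, the paper's own proof sets \(B:=\Delta_i(f;U_e(R))\symdiff\Delta_i(f;R)\) and asserts the conclusion. That choice works only in your Case A, where \(D\) meets \((U_e(R))_i\) in positive measure. In your Case B the relevant discrepancy lives on \((R_i\setminus(U_e(R))_i)\setminus\Delta_i(f;R)\), which is disjoint from \(D\); since \(D\) is there (up to null sets) contained in \(\Delta_i(f;R)\), both post-\(f\) records meet the paper's \(B\) in a null set, so the paper's argument fails even though the statement is still true and your \(B=R_i\setminus(U_e(R))_i\) rescues it. In Case C the statement itself fails. Your decomposition is therefore strictly more informative than the published argument: it isolates where that argument is valid (A), where it picks the wrong observable but the lemma survives (B), and where the lemma breaks (C).
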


\begin{proof}
By definition of weak influence, for some reachable \(R\) and shared \(i\),
\(\muP(\Delta_i(f;U_e(R))\symdiff \Delta_i(f;R))>0\).
Let \(B:=\Delta_i(f;U_e(R))\symdiff \Delta_i(f;R)\).
Then on \(B\), the worlds ruled out by \(f\) differ between the two contexts, implying the stated positive-measure symmetric difference.
\end{proof}

\subsection{Cycle-to-contradiction}
\begin{lemma}[Cycle inconsistency from exclusive branching]\label{lem:cycle_inconsistent}
Assume monotone tightening and (GS). Suppose there exists a strong-influence cycle
\[
e_1 \Rightarrow e_2 \Rightarrow \cdots \Rightarrow e_k \Rightarrow e_1
\]
with witnesses chosen for each edge \(e_j \Rightarrow e_{j+1}\) (indices mod \(k\)) producing exclusive branch pairs
\((C_{j+1}^0,C_{j+1}^1)\) on observables \(B_j\). Assume in addition that (BD) holds for each of these witnesses.
Then executing the cycle forces a globally inconsistent reachable record state, contradicting (GS).
\end{lemma}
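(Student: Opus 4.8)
The plan is to materialize the cycle as a single reachable computation and watch the global feasible set collapse to a null (resp.\ empty) set.

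First I would fix, for each edge $e_j\Rightarrow e_{j+1}$ (indices mod $k$), its chosen witness and the associated exclusive branch pair $(C_{j+1}^0,C_{j+1}^1)$ on the observable $B_j$ at the shared site $i_j$, so that $C_{j+1}^0\cap C_{j+1}^1$ is empty (null, in the measured case) while both $C_{j+1}^0$ and $C_{j+1}^1$ are nontrivial. The target is one reachable state $R^\ast$ in which, for a single fixed index which I take to be $j=k$, the local record at site $i_k$ intersected with $B_k$ is forced to lie both inside $C_1^0$ and inside $C_1^1$; as these are disjoint and nontrivial, that pins $R^\ast_{i_k}\cap B_k$ (hence $F(R^\ast)\cap B_k$) at measure zero while (BD) simultaneously pins it to the positive-measure set $C_1^1$ --- an impossibility, which is exactly the failure of global consistency of $R^\ast$ that contradicts (GS) (Definition~\ref{def:GS}).

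Second, I would build the computation. From a reachable state that realizes the (BD)-preconditions (Definition~\ref{def:BD}) of the chosen witnesses, append $e_1,e_2,\dots,e_k$ in cyclic order and then append $e_1$ once more; reachability is preserved since event sequences may repeat events. In this computation the first occurrence of $e_1$ runs with $e_k$ \emph{not} yet performed, whereas the trailing occurrence of $e_1$ runs with $e_k$ \emph{already} performed (the block $e_2,\dots,e_k$ sits in between), so (BD) applies to the edge $e_k\Rightarrow e_1$ in its ``$e$ has not occurred'' mode at the first $e_1$ and in its ``$e$ has occurred'' mode at the trailing $e_1$. The diamond property (Definition~\ref{def:diamond}) together with the locality axiom for events is used to interleave whatever preparatory events are needed so that the local precondition each occurrence of $e_1$ actually sees is trace-equivalent, on the sites relevant to the witness, to the precondition recorded in that witness --- precisely the trace-equivalence hypothesis (BD) demands.

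Third, I would read off the constraints and close the loop. By (BD), after the first $e_1$ the record at site $i_k$ on $B_k$ equals $C_1^0$ up to null sets; by monotone tightening every later event (the block $e_2,\dots,e_k$ and then the trailing $e_1$) only shrinks the record at site $i_k$, so in the final state $R^\ast$ it is contained in $C_1^0$ modulo a $\muP$-null set. By (BD) again, after the trailing $e_1$ --- which is the last event, so this state \emph{is} $R^\ast$ --- the record at site $i_k$ on $B_k$ satisfies $R^\ast_{i_k}\cap B_k\eqmu C_1^1$. Combining, $C_1^1\eqmu R^\ast_{i_k}\cap B_k$ which is contained in $C_1^0$ modulo null sets, so $C_1^1\eqmu C_1^1\cap C_1^0=\emptyset$, i.e.\ $\muP(C_1^1)=0$, contradicting nontriviality of the exclusive branch pair; phrased on the reachable state, $F(R^\ast)\cap B_k$ is null yet was forced to contain the positive-measure set $C_1^1$, so $R^\ast$ cannot be globally consistent. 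In the purely set-theoretic setting, replace ``up to null sets''/``null'' by ``exactly''/``empty'' throughout; the same chain gives $F(R^\ast)=\emptyset$, contradicting (GS).

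The step I expect to be the genuine obstacle is the second one: honestly exhibiting a single reachable computation in which the (BD)-preconditions of the chosen witnesses hold at \emph{both} occurrences of $e_1$, given that those two occurrences sit in histories differing by more than a swap of adjacent independent events (a whole block $e_2,\dots,e_k$ has run between them, and distinct edges may be witnessed at distinct reachable states with overlapping supports). Making this airtight forces a commitment to what ``the local preconditions relevant to the witness'' in (BD) means --- it must be only those features of the local record on which the witnessed write genuinely depends, not the entire local state --- and then a use of diamond, locality, and monotone tightening to keep exactly those features fixed while the remaining events execute. If one is unwilling to read (BD) that strongly, the fallback (the version I would actually write up) is to take all $k$ witnesses at a common reachable state $R$ and run $e_1e_2\cdots e_k\,e_1$ from $R$, arguing directly on $F$; and since the escape taxonomy (Theorem~\ref{thm:taxonomy}) only needs the four premises (GS), composability, monotonicity, (BD) to be jointly inconsistent, it is harmless that the contradiction is most naturally read as ``the accumulated witness data is impossible'' and only then packaged as ``(GS) fails.''
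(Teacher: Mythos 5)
Your construction diverges from the paper's: the paper never repeats an event. It runs the cycle conceptually once, assigns each edge \(e_j\Rightarrow e_{j+1}\) a branch bit \(b_j\) recording whether \(e_j\) preceded \(e_{j+1}\), uses (BD) to place any feasible world in \(\bigcap_{j}C_{j+1}^{b_j}\), and extracts the contradiction from the induced Boolean self-map on \(b_1\) being forced to be the flip \(b\mapsto 1-b\). You instead execute \(e_1e_2\cdots e_k\,e_1\) and trap the record at site \(i_k\) between the disjoint sets \(C_1^0\) and \(C_1^1\) via monotonicity. This is a genuinely different route, and the step you yourself flag as ``the genuine obstacle'' is not merely delicate but fatal as written.

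The decisive objection is that your argument, were the (BD) application at the trailing \(e_1\) licensed, would prove far too much: the witnesses for the edges \(e_1\Rightarrow e_2,\dots,e_{k-1}\Rightarrow e_k\) play no role in your derivation of the contradiction, which rests entirely on the single edge \(e_k\Rightarrow e_1\) and the two occurrences of \(e_1\). The same reasoning applied to the history \(e_1\,e_k\,e_1\) run from the witness state would then show that \emph{every} strong-influence edge by itself contradicts (GS), making \(\Rightarrow\) empty and Theorem~\ref{thm:chronology} vacuous. What blocks this is precisely the trace-equivalence clause of Definition~\ref{def:BD}: the history preceding the trailing \(e_1\) contains an earlier occurrence of \(e_1\) itself together with the block \(e_2,\dots,e_{k-1}\), and none of these can be swapped away, since consecutive events in a strong-influence cycle share sites and hence are not independent; so the precondition seen by the trailing \(e_1\) is not trace-equivalent to \(U_{e_k}(R)\), and (BD) says nothing about what that occurrence writes. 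Your proposed repair --- reading ``preconditions relevant to the witness'' loosely enough that (BD) fires anyway --- reinstates the too-strong conclusion above, and your fallback of taking all \(k\) witnesses at one common reachable state is also unjustified, since Definition~\ref{def:stronginf} only supplies a separate witness state per edge. The paper's branch-bit argument is itself terse (why the self-map is well defined, non-constant, and not the identity deserves more care than it receives), but it at least uses the cycle essentially: the contradiction there is a parity obstruction around the loop, not a collision between two occurrences of the same event.
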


\begin{proof}
Introduce branch bits \(b_j\in\{0,1\}\) for \(j=1,\dots,k\), where \(b_j=1\) indicates that in the witnessing situation for
the edge \(e_j\Rightarrow e_{j+1}\), the predecessor \(e_j\) occurred before \(e_{j+1}\); otherwise \(b_j=0\).

By (BD), after applying \(e_{j+1}\) the system enforces the branch constraint \(C_{j+1}^{b_j}\) on \(B_j\), and this choice
depends only on whether \(e_j\) occurred (before \(e_{j+1}\)) in the witnessed sense, not on unrelated independent interleavings.
Therefore any globally feasible world must lie in
\[
\bigcap_{j=1}^k C_{j+1}^{b_j}.
\]

Closing the cycle returns to \(e_1\), inducing a Boolean self-map \(\phi:\{0,1\}\to\{0,1\}\) for \(b_1\) obtained by propagating
the branch rule around the cycle once. Because each branch pair is exclusive and nontrivial, \(\phi\) cannot be constant. Because the
cycle is nondegenerate, \(\phi\) is not the identity. The only remaining Boolean self-map is the flip \(\phi(b)=1-b\).
Thus \(b_1=\phi(b_1)=1-b_1\), impossible.

Hence no consistent choice of branch bits exists, so the enforced intersection is empty (or null in measure), yielding a reachable record
state violating (GS).
\end{proof}

\subsection{Chronology from consistency}
\begin{theorem}[Chronology from consistency]\label{thm:chronology}
Assume:
\begin{enumerate}[leftmargin=2em]
\item monotone local tightening events (Section 3),
\item the diamond property (Definition~\ref{def:diamond}),
\item global consistency invariance (Definition~\ref{def:GS}),
\item branch determinacy (BD) for all strong-influence witnesses used.
\end{enumerate}
Then the strong influence relation \(\Rightarrow\) is acyclic. Consequently, its transitive closure \(\prec\)
is a strict partial order on events. In particular, there exists a function \(t:\cE\to\mathbb{R}\) such that
\[
e \prec f \implies t(e) < t(f).
\]
\end{theorem}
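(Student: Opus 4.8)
The plan is to reduce the theorem to the two lemmas already established, then extract the real-valued clock by a standard topological-sort argument. First I would observe that acyclicity of $\Rightarrow$ is almost immediate from Lemma~\ref{lem:cycle_inconsistent}: suppose for contradiction that $\Rightarrow$ has a directed cycle $e_1 \Rightarrow e_2 \Rightarrow \cdots \Rightarrow e_k \Rightarrow e_1$ with $k \ge 2$. For each edge $e_j \Rightarrow e_{j+1}$ (indices mod $k$) fix a witness $(R^{(j)}, i_j, B_j)$ as in Definition~\ref{def:stronginf}, which by definition yields an exclusive branch pair $(C_{j+1}^0, C_{j+1}^1)$ on $B_j$. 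By hypothesis (BD) holds for each of these witnesses, and we have assumed monotone tightening and (GS). Lemma~\ref{lem:cycle_inconsistent} then applies verbatim and produces a reachable record state $R$ with $F(R) = \emptyset$ (or $\muP(F(R)) = 0$ in the measured setting), contradicting (GS). Hence no such cycle exists and $\Rightarrow$ is acyclic.

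Next I would pass to the transitive closure $\prec$. Since $\prec$ is by definition transitive, it remains only to check irreflexivity: if $e \prec e$ then there is a chain $e = f_0 \Rightarrow f_1 \Rightarrow \cdots \Rightarrow f_m = e$ with $m \ge 1$, which is a strong-influence cycle (of length $m$ if the $f_j$ are distinct, or containing a shorter cycle otherwise), contradicting acyclicity. A transitive irreflexive relation is automatically asymmetric, so $\prec$ is a strict partial order on the event set $\cE$.

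Finally, for the clock $t : \cE \to \mathbb{R}$: if $\cE$ is finite (or more generally if $\prec$ is well-founded with no infinite descending chains, which holds automatically when $\cE$ is finite), I would invoke the fact that any strict partial order on a finite set admits a linear extension (topological sort), i.e.\ a strict total order $<^{*}$ refining $\prec$; enumerating $\cE$ in $<^{*}$-increasing order as $e_{(1)}, \dots, e_{(N)}$ and setting $t(e_{(r)}) := r$ gives $e \prec f \Rightarrow e <^{*} f \Rightarrow t(e) < t(f)$. If one wants to allow a countably infinite $\cE$, I would instead fix any enumeration $e_1, e_2, \dots$ and define $t$ recursively so that $t(e_n)$ is chosen in $\mathbb{R}$ strictly above $\max\{ t(e_m) : m < n,\ e_m \prec e_n \}$ and strictly below $\min\{ t(e_m) : m < n,\ e_n \prec e_m \}$, using density of $\mathbb{R}$; finiteness of the comparison set at each stage makes this well-defined. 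Either way the order-embedding property $e \prec f \Rightarrow t(e) < t(f)$ is exactly what is claimed.

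The genuinely substantive content — the Boolean-flip argument showing a cycle forces contradictory branch bits — is entirely contained in Lemma~\ref{lem:cycle_inconsistent}, so the main obstacle here is not in this theorem's proof but in being careful that the witnesses chosen around the cycle satisfy the trace-equivalence precondition required by (BD); I would make explicit that (BD) is assumed for \emph{exactly} the finite collection of witnesses appearing on the cycle edges, so that Lemma~\ref{lem:cycle_inconsistent}'s hypotheses are met without circularity. The topological-sort step is routine and I would not belabor it.
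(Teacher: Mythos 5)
Your proposal is correct and follows essentially the same route as the paper's own proof: invoke Lemma~\ref{lem:cycle_inconsistent} to rule out strong-influence cycles under (GS), (BD), and monotone tightening, then note that the transitive closure of an acyclic relation is a strict partial order admitting a real-valued order-preserving map. You simply spell out the irreflexivity check and the construction of \(t\) (including the countable case) that the paper dismisses as standard, which is fine but not a different argument.
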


\begin{proof}
If a strong-influence cycle existed, Lemma~\ref{lem:cycle_inconsistent} would produce a reachable globally inconsistent state,
contradicting (GS). Hence no strong-influence cycle exists and \(\Rightarrow\) is acyclic. Existence of a linear extension \(t\)
is standard for strict partial orders.
\end{proof}

\begin{remark}[Why the theorem uses strong influence]
Weak influence cycles can exist without contradiction because dependence alone does not force mutually exclusive record branches.
Strong influence packages the minimal exclusivity, and (BD) ensures the exclusivity is operationally well-defined rather than schedule-artifactual.
\end{remark}

\subsection{Trace invariance}
\begin{theorem}[Trace invariance of strong influence and chronology]\label{thm:trace}
Assume diamond. If two event sequences differ only by swaps of adjacent independent events, then the induced strong influence edges among
event occurrences are unchanged. Consequently, the derived partial order \(\prec\) is schedule-invariant within a trace class.
\end{theorem}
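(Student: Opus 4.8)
The plan is to reduce to a single adjacent swap and iterate, riding on the diamond axiom for all the real work. First I would record the state-level invariance that underlies everything: if $\sigma'$ is obtained from $\sigma$ by transposing two adjacent independent events $g\indep h$, then applying $\sigma$ and $\sigma'$ to any reachable state yields the same final record state, and the record states visited along $\sigma$ and along $\sigma'$ coincide at every intermediate point except the one between the transposed pair, where they differ only on $S(g)\cup S(h)$ (by locality). This is immediate from Definition~\ref{def:diamond} applied at the single step where $g$ and $h$ are adjacent. A routine induction on the number of transpositions extends this to arbitrary trace-equivalent $\sigma,\sigma'$, so the global reachability set is unchanged under trace-equivalent rescheduling.

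Second, I would observe that the relation $\Rightarrow$ on events (Definition~\ref{def:stronginf}) is defined purely by existential quantification over the reachability set, shared sites, and binary observables, with no reference to a schedule. Hence, once the reachability set is fixed — and by the previous paragraph it does not depend on which trace representative one uses — $\Rightarrow$ is literally the same relation for every schedule, and therefore so is its transitive closure $\prec$. This already yields the final sentence of the theorem: $\prec$ is schedule-invariant within a trace class, trivially, because it is schedule-independent outright.

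Third, for the occurrence-level statement I would fix the canonical bijection between occurrences of $\sigma$ and of $\sigma'$ (a single transposition merely swaps the labels of the two moved occurrences) and declare a realized strong-influence edge $o_a \to o_b$ to hold, for occurrences at positions $a<b$ of underlying events $e,f$, exactly when $e\Rightarrow f$. Under one swap of adjacent independent $g$ and $h$, the unique ordered pair of occurrences whose relative order changes is $(g\text{-occurrence},\,h\text{-occurrence})$; but $S(g)\cap S(h)=\emptyset$, so condition~1 of Definition~\ref{def:stronginf} fails for both $g\Rightarrow h$ and $h\Rightarrow g$, and this pair carries no realized edge in either schedule. Every other ordered pair keeps its order, and $\Rightarrow$ between the underlying events is untouched by the swap. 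Thus the set of realized edges is preserved by one swap, hence by induction by any trace-equivalent rescheduling, and passing to transitive closure preserves it as well.

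I expect the only genuine subtlety to be bookkeeping: keeping the occurrence/event distinction straight and phrasing the induction so that ``the record state after step $j$'' is unambiguous across the transposition. The mathematical content is light — it is essentially the classical Mazurkiewicz fact that trace-equivalent words carry the same dependence structure, specialized to the observation that our dependence $\Rightarrow$ is erected on top of support overlap, which is precisely the complement of the independence relation $\indep$ generating the traces; the swaps move only pairs that $\Rightarrow$ can never relate.
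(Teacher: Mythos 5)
Your proposal is correct and follows essentially the same route as the paper's proof: reduce to a single adjacent swap, use the diamond axiom to conclude the record states (and hence all witnesses) are unaffected, and note that the transposed pair, being independent, shares no site and therefore can never carry a strong-influence edge in either order. Your additional observation that \(\Rightarrow\) is defined by existential quantification over the schedule-independent reachability set (making the event-level relation invariant outright) is a slightly sharper framing than the paper's, but the substance is the same.
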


\begin{proof}
Swapping adjacent independent events \(e\indep f\) does not change the resulting record state by diamond. Strong influence requires overlap
of supports, so independent events cannot be in a strong influence relation with each other. Since the record state is unchanged, any witnessed
exclusivity relation remains witnessed. Therefore the strong influence graph is invariant under swaps, and so is its transitive closure \(\prec\).
\end{proof}

\subsection{Minimality/uniqueness}
\begin{theorem}[Minimality and uniqueness]\label{thm:unique}
The relation \(\prec\) is the unique minimal strict partial order extending the strong influence relation \(\Rightarrow\).
Any strict transitive relation containing \(\Rightarrow\) must contain \(\prec\).
\end{theorem}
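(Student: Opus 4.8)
The plan is to treat this as a routine order-theoretic corollary of Theorem~\ref{thm:chronology}: all the substantive content (acyclicity of \(\Rightarrow\)) is already established there, and what remains is to unwind the definition of transitive closure. Recall that \(\prec\) is \emph{by definition} the transitive closure of \(\Rightarrow\), so \(\Rightarrow\,\subseteq\,\prec\) and \(\prec\) is transitive; by Theorem~\ref{thm:chronology}, \(\prec\) is moreover irreflexive, hence a strict partial order. Thus \(\prec\) belongs to the family
\[
\mathcal{P} := \{\, Q \ :\ Q \text{ is a strict partial order on } \cE \text{ with } \Rightarrow\,\subseteq\, Q \,\},
\]
which is therefore nonempty.

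Next I would prove the second assertion, from which the first follows. Let \(Q\) be \emph{any} transitive relation on \(\cE\) with \(\Rightarrow\,\subseteq\, Q\) (in particular any member of \(\mathcal{P}\), since strict partial orders are transitive). Suppose \(e\prec f\). By definition of transitive closure there is a finite chain
\[
e = e_0 \Rightarrow e_1 \Rightarrow \cdots \Rightarrow e_m = f, \qquad m\ge 1 .
\]
Each link gives \(e_{j}\mathrel{Q} e_{j+1}\) because \(\Rightarrow\,\subseteq\, Q\), and \(m-1\) applications of transitivity of \(Q\) yield \(e\mathrel{Q} f\). Hence \(\prec\,\subseteq\, Q\). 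This is exactly the statement that every strict transitive relation containing \(\Rightarrow\) contains \(\prec\).

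Finally, for minimality and uniqueness: the previous paragraph shows \(\prec\,\subseteq\, Q\) for every \(Q\in\mathcal{P}\), while \(\prec\in\mathcal{P}\) by the first step; hence \(\prec\) is the least element of \((\mathcal{P},\subseteq)\). A least element of a partial order is \emph{a fortiori} its unique minimal element, so \(\prec\) is the unique minimal strict partial order extending \(\Rightarrow\).

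I do not anticipate a genuine obstacle; the only point requiring care is the appeal to Theorem~\ref{thm:chronology} for irreflexivity of \(\prec\). Without acyclicity of \(\Rightarrow\), the relation \(\prec\) need not be a strict partial order at all, and ``minimal strict partial order extending \(\Rightarrow\)'' could be vacuous; so the statement is properly read under the standing hypotheses of Theorem~\ref{thm:chronology}. Once those are in force, the argument above is just the universal property of the transitive closure.
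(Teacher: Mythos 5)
Your proposal is correct and follows essentially the same route as the paper's own proof: both rest on the universal property of the transitive closure (any transitive relation containing \(\Rightarrow\) contains all \(\Rightarrow\)-chains, hence \(\prec\)) together with an appeal to the acyclicity from Theorem~\ref{thm:chronology} for strictness. Your version is simply a more explicit write-up of the same argument, including the worthwhile observation that the statement is vacuous without the standing hypotheses of Theorem~\ref{thm:chronology}.
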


\begin{proof}
By definition, \(\prec\) is the transitive closure of \(\Rightarrow\), hence the smallest transitive relation containing \(\Rightarrow\).
Acyclicity implies strictness. Any strict transitive extension must contain all \(\Rightarrow\)-paths, hence contains \(\prec\).
\end{proof}

\section{A canonical information clock}
Assume \((\Om,\Sig,\muP)\) with \(\muP(\Om)>0\).

\begin{definition}[Information content]
Define
\[
\cI(R) := -\log \muP(F(R)),
\]
with \(\cI(R)=+\infty\) if \(\muP(F(R))=0\).
\end{definition}

\begin{theorem}[Monotonicity of information clock]\label{thm:clock}
For any reachable \(R\) and any event \(e\),
\[
\cI(U_e(R)) \ge \cI(R),
\]
with strict inequality whenever \(\muP(F(U_e(R)))<\muP(F(R))\).
\end{theorem}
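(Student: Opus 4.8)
The plan is to reduce the claim to the elementary observation that a monotone tightening event can only shrink the global feasible set, after which monotonicity of \(\muP\) and antitonicity of \(-\log\) finish the argument. No clever machinery is needed; the content is entirely in the nestedness of feasible sets.

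First I would establish the containment \(F(U_e(R)) \subseteq F(R)\). Fix a site \(i \in X\): if \(i \notin S(e)\) then locality gives \((U_e(R))_i = R_i\), and if \(i \in S(e)\) then monotone tightening gives \((U_e(R))_i \subseteq R_i\); in either case \((U_e(R))_i \subseteq R_i\). Intersecting over all sites yields \(F(U_e(R)) = \bigcap_{i=1}^n (U_e(R))_i \subseteq \bigcap_{i=1}^n R_i = F(R)\). Both sets lie in \(\Sig\), so monotonicity of the measure gives \(\muP(F(U_e(R))) \le \muP(F(R))\). Since \(x \mapsto -\log x\) is (weakly) decreasing on \((0,\infty]\) under the conventions \(-\log 0 = +\infty\) (which matches the definition of \(\cI\)) and \(-\log \infty = -\infty\), we conclude \(\cI(U_e(R)) = -\log\muP(F(U_e(R))) \ge -\log\muP(F(R)) = \cI(R)\).

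For the strict part I would split on whether the smaller measure vanishes. If \(\muP(F(U_e(R))) = 0\) while \(\muP(F(R)) > 0\), then \(\cI(U_e(R)) = +\infty\) and \(\cI(R) < +\infty\), so the inequality is strict. Otherwise \(0 < \muP(F(U_e(R))) < \muP(F(R)) \le \infty\), and strict antitonicity of \(-\log\) on that range gives \(\cI(U_e(R)) > \cI(R)\). The only point requiring care—hardly an obstacle—is bookkeeping of the extended-real conventions for \(-\log\) at \(0\) and, when \(\muP\) is merely \(\sigma\)-finite rather than normalized, at \(+\infty\). It is also worth noting that under (GS) every reachable \(R\) satisfies \(\muP(F(R)) > 0\), so \(\cI\) is finite along reachable evolutions and the ``clock'' reading is non-vacuous: information content is non-decreasing and strictly increases precisely when an event excludes a set of possibilities of positive measure.
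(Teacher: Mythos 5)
Your proof is correct and follows exactly the paper's argument: establish \(F(U_e(R))\subseteq F(R)\) from locality and monotone tightening, apply monotonicity of \(\muP\), then apply \(-\log\). The extra bookkeeping for the extended-real conventions and the case split for strictness are welcome but do not change the route.
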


\begin{proof}
For each \(i\), \((U_e(R))_i\subseteq R_i\), hence \(F(U_e(R))\subseteq F(R)\). Therefore \(\muP(F(U_e(R)))\le \muP(F(R))\).
Apply \(-\log\).
\end{proof}

\section{Escape taxonomy}\label{sec:taxonomy}
\begin{theorem}[Complete escape taxonomy]\label{thm:taxonomy}
Suppose a model admits a strong-influence cycle without reaching global inconsistency. Then at least one of the following must fail:
\begin{enumerate}[leftmargin=2em]
\item \textbf{Global consistency (GS):} some reachable \(R\) has \(F(R)=\emptyset\) (or \(\muP(F(R))=0\));
\item \textbf{Local composability (diamond):} there exist independent \(e\indep f\) and reachable \(R\) with \(U_e(U_f(R))\neq U_f(U_e(R))\);
\item \textbf{Monotone information writing:} there exist \(e\), reachable \(R\), and \(i\in S(e)\) with \((U_e(R))_i\not\subseteq R_i\);
\item \textbf{Branch determinacy (BD):} at least one strong-influence witness is schedule-dependent in a way that violates Definition~\ref{def:BD}.
\end{enumerate}
\end{theorem}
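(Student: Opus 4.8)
The plan is to prove the statement by contraposition: I would assume that all four listed conditions hold at once — global consistency (GS), the diamond property (Definition~\ref{def:diamond}), monotone information writing, and branch determinacy (BD) for every strong-influence witness — and deduce that the strong-influence relation $\Rightarrow$ then has no cycle whatsoever, so in particular no such model can exhibit a strong-influence cycle without already reaching global inconsistency. The useful observation is simply that these four conditions are exactly the hypotheses of Theorem~\ref{thm:chronology}, and in particular they contain the hypotheses of Lemma~\ref{lem:cycle_inconsistent}.

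The core step is then a short appeal to Lemma~\ref{lem:cycle_inconsistent}: were there a strong-influence cycle $e_1 \Rightarrow \cdots \Rightarrow e_k \Rightarrow e_1$, we could select a witness for each edge, use monotone tightening so that the enforced exclusive-branch constraints accumulate by intersection rather than being overwritten, use (BD) so that the branch enforced after $e_{j+1}$ is a function solely of whether $e_j$ preceded it, and propagate the resulting Boolean branch rule once around the loop; that rule is forced to be the flip $b \mapsto 1-b$, which has no fixed point, so the cycle drives some reachable $R$ to $F(R)=\emptyset$ (or $\muP(F(R))=0$), contradicting (GS). Hence no cycle exists. Contraposing, any concrete model that does display a strong-influence cycle must falsify the conjunction of the four conditions, i.e.\ at least one of them individually — which is precisely the enumerated disjunction. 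The first disjunct, ``(GS) fails'', is the branch in which the cycle genuinely forces an inconsistent reachable state; the remaining disjuncts are the three ways a cycle can be ``absorbed'' without that: independent events stop commuting, some event relaxes rather than tightens a record, or some exclusive-branch witness turns out to be a scheduling artifact in the sense of Definition~\ref{def:BD}.

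Finally, because the taxonomy is advertised as \emph{complete}, I would add a brief justification that each of the four conditions is load-bearing at an identifiable point of this argument, so the list is exhaustive rather than accidentally long: (GS) is the statement being contradicted; monotone writing is what guarantees $F(U_e(R)) \subseteq F(R)$ and hence the accumulation-by-intersection of enforced branches; (BD) is what makes the branch bit attached to edge $j$ a well-defined function of the before/after order of $e_j$ and $e_{j+1}$; and diamond, via trace semantics (Theorem~\ref{thm:trace}), is what allows the $k$ separately-witnessed situations to be realized inside a single reachable history carrying a definite order on each cycle edge. That last role of diamond is the step I expect to be the main obstacle: the proof of Lemma~\ref{lem:cycle_inconsistent} speaks of ``executing the cycle'' as though the individual edge-witnesses compose into one schedule, and the work is to verify that trace-equivalence together with (BD)'s quantification over trace-equivalent preconditions actually supplies such a schedule — and, dually, that when diamond is the condition that fails one cannot secretly recover a commuting surrogate that reinstates the contradiction. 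Everything else reduces to the already-proved lemmas.
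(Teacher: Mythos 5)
Your proposal is correct and takes essentially the same route as the paper: the paper's own proof is exactly the one-line contraposition of Theorem~\ref{thm:chronology} (which in turn rests on Lemma~\ref{lem:cycle_inconsistent}). Your additional commentary on where each hypothesis is load-bearing, and your flag that composing the per-edge witnesses into a single reachable schedule is the genuinely delicate step hidden inside Lemma~\ref{lem:cycle_inconsistent}, goes beyond what the paper writes but does not change the argument.
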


\begin{proof}
Theorem~\ref{thm:chronology} shows that (monotone writing + diamond + GS + BD) implies absence of strong-influence cycles. Therefore, if a strong-influence cycle occurs without inconsistency, at least one premise must be violated.
\end{proof}

\section{Examples}

\subsection{Example: a simple composable tightening system (Petri-net flavor)}
\begin{example}[Two-site tightening with commuting independent events]
Let \(X=\{1,2\}\), \(\Om=\{0,1\}^2\), and \(\Sig=2^\Om\). A record state \(R=(R_1,R_2)\) is a pair of subsets of \(\Om\).
Define events:
\begin{itemize}[leftmargin=2em]
\item \(e_1\) supported on site 1: it tightens \(R_1\) to enforce \(x_1=0\) (i.e.\ intersect with \(\{\omega: \omega_1=0\}\));
\item \(e_2\) supported on site 2: it tightens \(R_2\) to enforce \(x_2=0\).
\end{itemize}
Then \(e_1\indep e_2\) and diamond holds. There is no overlap, hence no influence edges, and \(\prec\) is empty (all events incomparable).
Global consistency holds from the initial unconstrained state. This illustrates schedule gauge and absence of forced precedence when operations are independent.
\end{example}

\subsection{Example: explicit strong-influence cycle forcing inconsistency}
\begin{example}[Cycle gadget]
Let \(X=\{1\}\) be a single site, \(\Om=\{0,1\}\), \(\Sig=2^\Om\), \(\muP\) counting measure. Start with \(R^{(0)}_1=\Om\).
Define two events \(a,b\) on the same site by:
\[
(U_a(R))_1 =
\begin{cases}
\{0\} & \text{if } R_1=\Om,\\
\{1\} & \text{if } R_1=\{0\},\\
\{1\} & \text{if } R_1=\{1\},
\end{cases}
\qquad
(U_b(R))_1 =
\begin{cases}
\{0\} & \text{if } R_1=\Om,\\
\emptyset & \text{if } R_1=\{0\},\\
\{1\} & \text{if } R_1=\{1\}.
\end{cases}
\]
Both events are monotone tightenings. One can verify a strong-influence cycle occurs (exclusive branching on \(B=\{0\}\)), and executing an appropriate cycle reaches the empty feasible set, violating (GS). This demonstrates that strong influence cycles are exactly the kind that force inconsistency when global consistency is demanded.
\end{example}

\section{Algorithmic detection in finite models}
When \(\Om\) is finite and the event set is finite, one can compute:
\begin{enumerate}[leftmargin=2em]
\item reachability (by forward search on \(\Sig^n\) states);
\item strong influence edges \(e\Rightarrow f\) by searching for witnesses \((R,i,B)\) satisfying Definition~\ref{def:stronginf};
\item directed cycles in the strong influence graph (linear-time in edges and vertices).
\end{enumerate}
A detected cycle certifies that at least one of (GS), diamond, monotonicity, or (BD) must fail (Theorem~\ref{thm:taxonomy}).

\section{Discussion}
\paragraph{What is derived.}
An intrinsic chronology: a strict partial order \(\prec\) on events induced by strong influence, invariant across trace-equivalent schedules, and minimal among precedence relations compatible with strong influence.

\paragraph{What is not claimed.}
We do not claim a new physical theory of time, nor a metric time coordinate, nor the thermodynamic arrow. The result is a structural constraint on composable information systems.

\paragraph{Why the nondegeneracy matters.}
Without exclusivity/branching (strong influence), cyclic dependence does not necessarily imply contradiction. Making this explicit is essential for correctness and for clear comparison to concurrency and sheaf/gluing frameworks.

\paragraph{Takeaway.}
If a system only ever accumulates information (monotone tightening), independent actions are reorderable (diamond), and contradictions never occur along reachable evolutions (GS), then any operational dependence that forces mutually exclusive branches (strong influence) and is well-defined under scheduling (BD) must be acyclic. The resulting transitive closure \(\prec\) is therefore an intrinsic, schedule-invariant before/after relation.

\section*{Acknowledgements}
The authors thank the broader concurrency and logic community for foundational insights.

\end{document}